\documentclass[10pt,twocolumn]{article}

\usepackage[
  a4paper,
  left=1.8cm,
  right=1.8cm,
  top=1.8cm,
  bottom=2.0cm
]{geometry}
\usepackage[T1]{fontenc}
\usepackage{lmodern}
\usepackage{microtype}
\usepackage{amsmath,amsthm}
\usepackage{braket}
\usepackage{enumitem}
\usepackage{titlesec}
\usepackage[hidelinks]{hyperref}
\usepackage{orcidlink}

\setlist[itemize]{
  leftmargin=1.25em,
  itemsep=0.2em,
  topsep=0.35em,
  parsep=0pt
}

\renewcommand{\thesection}{\Roman{section}}

\titleformat{\section}[block]
  {\normalfont\normalsize\bfseries\centering}
  {\thesection.}
  {0.6em}
  {\MakeUppercase}

\titlespacing*{\section}
  {0pt}
  {1.7ex plus 0.5ex minus 0.2ex}
  {0.6ex plus 0.2ex}

\theoremstyle{plain}
\newtheorem{proposition}{Proposition}
\newtheorem{corollary}[proposition]{Corollary}

\hypersetup{
  pdftitle={
    Comment on Scalable Quantum Machine Learning:
    Trainability, Expressivity and Efficiency:
    Polynomial Evaluation of the Triplet-Block Readout
  },
  pdfauthor={Erfan Amidi}
}

\begin{document}

\twocolumn[
\begin{center}

{\LARGE\bfseries
Comment on ``Scalable Quantum Machine Learning:
Trainability, Expressivity and Efficiency'':\\[0.25em]
Polynomial Evaluation of the Triplet-Block Readout
\par}

\vspace{0.8em}

{\large
Erfan Amidi\,\orcidlink{0009-0007-2612-6313}
\par}

\vspace{0.45em}

{\normalsize\itshape
Institute for Advanced Studies in Basic Sciences
\par}

\vspace{0.85em}

\begin{minipage}{0.96\textwidth}
\small

We examine the classical-cost claim for the triplet-block two-body
readout in Ref.~\cite{Kerenidis2026v1}. The Gaussian-state expansion
used there gives an \(O(2^{2k/3}\operatorname{poly}(n))\) classical
algorithm, but it is not necessary for fixed-body observables.
The triplet-block input has an explicitly computable diagonal
two-particle reduced density matrix, which passive fermionic linear
optics propagates through \(\bigwedge^2 W\). This gives a deterministic
\(O(n^4)\) algorithm for the complete correlator vector
\((\langle n_i n_j\rangle)_{i<j}\), independently of \(k\) and the
fermionic-linear-optics extent. More generally, every
number-conserving fixed-\(r\)-body expectation is polynomially
computable whenever the input \(r\)-particle reduced density matrix is
classically available; if that matrix is diagonal, all diagonal
correlators are computable in \(O(n^{2r})\) time. This invalidates the
algorithm-relative exponential-cost conclusion for the supervised
two-body readout, without affecting the gradient-variance,
barren-plateau, parameter-shift, or sampling-hardness results.
\end{minipage}

\end{center}

\vspace{0.8em}
]

\section{Claim under examination}

Reference~\cite{Kerenidis2026v1} considers \(n\) fermionic modes and
the triplet-block input
\begin{equation}
\begin{gathered}
\ket{\Psi_{\mathrm{in}}}
=
\ket{\Psi_6}^{\otimes m}
\otimes
\ket{0}^{\otimes(n-6m)},
\\
\ket{\Psi_6}
=
\frac{\ket{111000}+\ket{000111}}{\sqrt{2}},
\qquad
m=\frac{k}{3}.
\end{gathered}
\label{eq:triplet-input}
\end{equation}
The fixed spreading circuit, data-dependent phase layer, and trainable
brick-wall or butterfly circuit are all passive fermionic-linear-optics
operations. Their combined action is therefore represented by a
single-particle unitary
\begin{equation}
W(\mathbf{x},\theta)
=
W_{\mathrm{train}}(\theta)D(\mathbf{x})H\in U(n).
\label{eq:total-single-particle-unitary}
\end{equation}
The supervised readout is the two-body correlator vector
\begin{equation}
\begin{gathered}
\mathbf q(\mathbf{x},\theta)
=
\bigl(q_{ij}(\mathbf{x},\theta)\bigr)_{i<j},
\\
q_{ij}
=
\langle n_i n_j\rangle,
\qquad
n_i=a_i^\dagger a_i.
\end{gathered}
\label{eq:two-body-readout}
\end{equation}
Each block in \eqref{eq:triplet-input} is a sum of two Slater
determinants, so the complete input has a Gaussian decomposition with
\(\chi=2^m=2^{k/3}\) terms:
\[
\ket{\Psi_{\mathrm{in}}}
=
\sum_{s=1}^{\chi}
c_s\ket{\phi_s^{\mathrm{Gauss}}}.
\]
Expanding a two-body expectation value in this decomposition gives
\begin{equation}
q_{ij}
=
\sum_{s,s'=1}^{\chi}
c_s^\ast c_{s'}
\bra{\phi_s^{\mathrm{Gauss}}}
\mathcal U_W^\dagger n_i n_j\mathcal U_W
\ket{\phi_{s'}^{\mathrm{Gauss}}}.
\label{eq:gaussian-double-sum}
\end{equation}
Evaluating the \(\chi^2\) cross terms separately produces the upper
bound
\[
O\!\left(
\chi^2\operatorname{poly}(n)
\right)
=
O\!\left(
2^{2k/3}\operatorname{poly}(n)
\right).
\]
Reference~\cite{Kerenidis2026v1} notes that the mixed-Pfaffian
compression available for form-rank-\(2\) inputs does not apply to the
form-rank-\(3\) triplet state, and it consequently identifies
\eqref{eq:gaussian-double-sum} as the best available route. This
argument supplies an upper bound for one algorithm, rather than a
complexity-theoretic lower bound. More importantly, the full Gaussian
expansion is unnecessary for the observable in
\eqref{eq:two-body-readout}.

\section{Fixed-body closure and polynomial algorithm}
\label{sec:fixed-body-closure}

For a fixed positive integer \(r\leq n\), define the ordered index set
\[
\mathcal I_r
:=
\{(i_1,\ldots,i_r):1\leq i_1<\cdots<i_r\leq n\}.
\]
For \(I=(i_1,\ldots,i_r)\in\mathcal I_r\), write
\[
a_I:=a_{i_r}\cdots a_{i_1},
\qquad
a_I^\dagger:=a_{i_1}^\dagger\cdots a_{i_r}^\dagger.
\]
The \(r\)-particle reduced density matrix of a fermionic state \(\rho\)
is the \(\binom nr\times\binom nr\) matrix
\begin{equation}
[\Gamma_\rho^{(r)}]_{I,J}
:=
\operatorname{Tr}\!\left[\rho\,a_J^\dagger a_I\right],
\qquad I,J\in\mathcal I_r.
\label{eq:r-rdm-definition}
\end{equation}
Its diagonal contains the fixed-body number correlators,
\begin{equation}
[\Gamma_\rho^{(r)}]_{I,I}
=
\operatorname{Tr}\!\left[
\rho\prod_{\ell=1}^r n_{i_\ell}
\right].
\label{eq:r-rdm-diagonal-correlator}
\end{equation}
Let \(\mathcal U_W\) be a passive fermionic-linear-optics
unitary~\cite{Bravyi2005}, with
\begin{equation}
\mathcal U_W^\dagger a_i\mathcal U_W
=
\sum_{p=1}^nW_{ip}a_p.
\label{eq:passive-flo-action}
\end{equation}
For \(I,P\in\mathcal I_r\), let \(W_{I,P}\) denote the \(r\times r\)
submatrix with rows indexed by \(I\) and columns indexed by \(P\).
The induced action on the \(r\)-particle sector is
\begin{equation}
R_r(W):=\bigwedge\nolimits^r W,
\qquad
[R_r(W)]_{I,P}:=\det W_{I,P}.
\label{eq:exterior-power-matrix}
\end{equation}

\begin{proposition}[Fixed-body closure under passive fermionic linear optics]
\label{prop:fixed-body-closure}
For every fermionic state \(\rho\), passive-FLO unitary
\(\mathcal U_W\), and \(1\leq r\leq n\),
\begin{equation}
\Gamma_{\mathcal U_W\rho\mathcal U_W^\dagger}^{(r)}
=
R_r(W)\Gamma_\rho^{(r)}R_r(W)^\dagger.
\label{eq:r-rdm-propagation}
\end{equation}
Consequently, every number-conserving \(r\)-body observable is
determined by \(\Gamma_\rho^{(r)}\). Writing
\[
N_r
:=
|\mathcal I_r|
=
\binom nr
=
O(n^r)
\qquad
\text{for fixed }r,
\]
the matrices in \eqref{eq:r-rdm-propagation} have dimension
\(N_r\times N_r\). If the input \(r\)-particle reduced density matrix
is classically available, constructing \(R_r(W)\) and performing the
dense propagation require
\[
O(r^3N_r^2)
\qquad\text{and}\qquad
O(N_r^3)
\]
arithmetic operations, respectively. Hence every number-conserving
fixed-\(r\)-body expectation is classically computable in polynomial
time, independently of the form-rank of the full input state.
\end{proposition}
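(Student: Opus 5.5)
The plan is to transform the operators rather than the state. First, write the Heisenberg action of \(\mathcal U_W\) on ordered monomials. By \eqref{eq:passive-flo-action} and the fact that \(\mathcal U_W^\dagger(\cdot)\mathcal U_W\) is an algebra automorphism,
\[
\mathcal U_W^\dagger a_I\mathcal U_W
=
\sum_{p_1,\ldots,p_r}W_{i_r p_r}\cdots W_{i_1p_1}\,a_{p_r}\cdots a_{p_1}.
\]
The terms with repeated indices vanish because \(a_p^2=0\). Every remaining term can be reordered into increasing order, which costs the sign of the permutation. Collecting the terms with a common ordered set \(P\) gives an antisymmetrized sum over \(S_r\), and this sum is exactly \(\det W_{I,P}\). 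Hence
\[
\mathcal U_W^\dagger a_I\mathcal U_W=\sum_{P}[R_r(W)]_{I,P}\,a_P .
\]
Taking adjoints gives \(\mathcal U_W^\dagger a_J^\dagger\mathcal U_W=\sum_Q\overline{[R_r(W)]_{J,Q}}\,a_Q^\dagger\). The only care needed is to check that the ordering conventions \(a_I=a_{i_r}\cdots a_{i_1}\) and \(a_I^\dagger=a_{i_1}^\dagger\cdots a_{i_r}^\dagger\) are mutually adjoint. They are, so the same determinant appears with complex conjugation. This sign and ordering bookkeeping is the one step I expect to need a careful check. The cleanest way to do it is to note that \(a_I^\dagger\ket{0}\) is the wedge product \(e_{i_1}\wedge\cdots\wedge e_{i_r}\), so the coefficients are Plücker coordinates, i.e.\ minors, by the standard exterior-algebra identity.

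Next, substitute into the definition \eqref{eq:r-rdm-definition} using cyclicity of the trace:
\[
[\Gamma^{(r)}_{\mathcal U_W\rho\mathcal U_W^\dagger}]_{I,J}
=\operatorname{Tr}[\rho\,\mathcal U_W^\dagger a_J^\dagger\mathcal U_W\,\mathcal U_W^\dagger a_I\mathcal U_W]
=\sum_{P,Q}[R_r(W)]_{I,P}\,[\Gamma^{(r)}_\rho]_{P,Q}\,\overline{[R_r(W)]_{J,Q}} .
\]
This is precisely \eqref{eq:r-rdm-propagation}.

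For the consequence about observables, take any number-conserving \(r\)-body observable \(O=\sum_{I,J}h_{IJ}a_I^\dagger a_J\). Its expectation is \(\operatorname{Tr}[\rho O]=\sum h_{IJ}[\Gamma_\rho^{(r)}]_{J,I}\), which is linear in \(\Gamma^{(r)}\). Observables with fewer than \(r\) bodies are also covered: on a fixed particle-number sector they can be written as contractions of \(\Gamma^{(r)}\) divided by combinatorial factors. For the diagonal correlators, \eqref{eq:r-rdm-diagonal-correlator} follows by anticommuting the operators into the form \(\prod n_{i_\ell}\), which is possible because the indices are distinct.

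For the complexity count, \(R_r(W)\) has \(N_r^2\) entries. Each entry is an \(r\times r\) determinant, computed by Gaussian elimination in \(O(r^3)\) operations, for a total of \(O(r^3N_r^2)\). The propagation is two dense \(N_r\times N_r\) matrix products, costing \(O(N_r^3)\). Both bounds are polynomial in \(n\) for fixed \(r\). The full input state never enters the argument, so the cost does not depend on its form-rank.
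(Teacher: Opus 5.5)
Your proof is correct and follows the paper's argument essentially step for step. You Heisenberg-transform the ordered annihilation string into $\sum_P \det W_{I,P}\,a_P$, take adjoints, use cyclicity of the trace to obtain $R_r(W)\Gamma_\rho^{(r)}R_r(W)^\dagger$, and count $N_r^2$ determinants at $O(r^3)$ each plus dense $O(N_r^3)$ products. Your extra remark that fewer-body observables are contractions of $\Gamma^{(r)}$ is not needed here, and it holds only on a definite particle-number sector with $N\geq r$: for $N<r$ the combinatorial factor vanishes and $\Gamma^{(r)}\equiv 0$.
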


\begin{proof}
Applying \eqref{eq:passive-flo-action} to the annihilation string gives
\[
\mathcal U_W^\dagger a_I\mathcal U_W
=
\left(
\sum_{p_r=1}^n W_{i_rp_r}a_{p_r}
\right)
\cdots
\left(
\sum_{p_1=1}^n W_{i_1p_1}a_{p_1}
\right).
\]
Any term containing a repeated mode vanishes because \(a_p^2=0\).
For a fixed ordered set
\(P=(p_1<\cdots<p_r)\), collect the \(r!\) permutations of its modes
and reorder the annihilation operators using
\(a_pa_q=-a_qa_p\). The resulting coefficient is
\[
\sum_{\sigma\in S_r}
\operatorname{sgn}(\sigma)
\prod_{\ell=1}^r
W_{i_\ell p_{\sigma(\ell)}}
=
\det W_{I,P}.
\]
Therefore,
\begin{equation}
\mathcal U_W^\dagger a_I\mathcal U_W
=
\sum_{P\in\mathcal I_r}
[R_r(W)]_{I,P}a_P.
\label{eq:annihilation-string-transform}
\end{equation}
Taking adjoints similarly gives
\begin{equation}
\mathcal U_W^\dagger a_J^\dagger\mathcal U_W
=
\sum_{Q\in\mathcal I_r}
\overline{[R_r(W)]_{J,Q}}a_Q^\dagger.
\label{eq:creation-string-transform}
\end{equation}
Let
\(\rho_{\mathrm{out}}
=\mathcal U_W\rho\mathcal U_W^\dagger\).
Cyclicity of the trace and
\eqref{eq:annihilation-string-transform}--%
\eqref{eq:creation-string-transform} imply
\begin{align*}
[\Gamma_{\rho_{\mathrm{out}}}^{(r)}]_{I,J}
&=
\operatorname{Tr}\!\left[
\rho\,
\mathcal U_W^\dagger
a_J^\dagger a_I
\mathcal U_W
\right]
\\
&=
\sum_{P,Q\in\mathcal I_r}
[R_r(W)]_{I,P}
[\Gamma_\rho^{(r)}]_{P,Q}
\overline{[R_r(W)]_{J,Q}}.
\end{align*}
This is precisely the matrix identity
\eqref{eq:r-rdm-propagation}.
Every number-conserving \(r\)-body observable can be written as
\[
O^{(r)}
=
\sum_{I,J\in\mathcal I_r}
o_{I,J}a_J^\dagger a_I.
\]
By the definition of the \(r\)-particle reduced density matrix,
\[
\operatorname{Tr}\!\left(\rho O^{(r)}\right)
=
\sum_{I,J\in\mathcal I_r}
o_{I,J}
[\Gamma_\rho^{(r)}]_{I,J}.
\]
Hence every such expectation value is determined entirely by
\(\Gamma_\rho^{(r)}\).

The matrix \(R_r(W)\) has \(N_r^2\) entries, each given by a
determinant of size \(r\). Since each determinant is computable in
\(O(r^3)\) arithmetic operations, constructing the complete matrix
costs \(O(r^3N_r^2)\).
The propagation in \eqref{eq:r-rdm-propagation} involves ordinary
multiplication of dense \(N_r\times N_r\) matrices and therefore costs
\(O(N_r^3)\) arithmetic operations. Since \(N_r=O(n^r)\), both costs
are polynomial in \(n\) whenever \(r\) is fixed.
No step of the derivation uses a Gaussian decomposition or assumes any
bound on the form-rank of the input state.
\end{proof}

\begin{corollary}[Diagonal input reduced density matrix]
\label{cor:diagonal-r-rdm}
Suppose
\(\Gamma_\rho^{(r)}=\operatorname{diag}(w_P)_{P\in\mathcal I_r}\).
For every \(I\in\mathcal I_r\), the output diagonal correlator is
\begin{equation}
q_I
:=
\operatorname{Tr}\!\left[
\mathcal U_W\rho\mathcal U_W^\dagger
\prod_{\ell=1}^r n_{i_\ell}
\right]
=
\sum_{P\in\mathcal I_r}
w_P\left|\det W_{I,P}\right|^2.
\label{eq:diagonal-r-body-algorithm}
\end{equation}
For fixed \(r\), all \(\binom nr\) diagonal correlators are therefore
computable in
\(O(r^3\binom nr^2)=O(n^{2r})\) arithmetic operations.
\end{corollary}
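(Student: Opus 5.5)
The plan is to specialize Proposition~\ref{prop:fixed-body-closure} to the diagonal entries and then count operations. By \eqref{eq:r-rdm-diagonal-correlator} applied to the output state \(\rho_{\mathrm{out}}=\mathcal U_W\rho\mathcal U_W^\dagger\), the quantity \(q_I\) equals the diagonal entry \([\Gamma^{(r)}_{\rho_{\mathrm{out}}}]_{I,I}\). One must check that \(\prod_{\ell}n_{i_\ell}=a_I^\dagger a_I\) with the ordering conventions \(a_I=a_{i_r}\cdots a_{i_1}\), \(a_I^\dagger=a_{i_1}^\dagger\cdots a_{i_r}^\dagger\). This is the content of \eqref{eq:r-rdm-diagonal-correlator}, which I take as given. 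The identity holds because the nested product pairs each \(a_{i_\ell}^\dagger\) with \(a_{i_\ell}\), and the remaining factors commute past in pairs with no net sign.

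Next I would invoke \eqref{eq:r-rdm-propagation}:
\[
[\Gamma^{(r)}_{\rho_{\mathrm{out}}}]_{I,I}=\sum_{P,Q\in\mathcal I_r}[R_r(W)]_{I,P}\,[\Gamma^{(r)}_\rho]_{P,Q}\,\overline{[R_r(W)]_{I,Q}}.
\]
Substituting the diagonal hypothesis \([\Gamma^{(r)}_\rho]_{P,Q}=w_P\delta_{PQ}\) collapses the double sum to \(\sum_P w_P\,[R_r(W)]_{I,P}\overline{[R_r(W)]_{I,P}}\). With \([R_r(W)]_{I,P}=\det W_{I,P}\) from \eqref{eq:exterior-power-matrix}, this gives \(\sum_P w_P|\det W_{I,P}|^2\), which is \eqref{eq:diagonal-r-body-algorithm}.

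For the cost, I would not use the dense \(O(N_r^3)\) propagation. First compute all \(N_r^2\) minors \(\det W_{I,P}\), each by Gaussian elimination in \(O(r^3)\) operations, for a total of \(O(r^3N_r^2)\). Then each \(q_I\) is a sum of \(N_r\) terms, so all \(N_r\) correlators cost an additional \(O(N_r^2)\). The total is \(O(r^3N_r^2)\), and since \(N_r=\binom nr\le n^r\) with \(r\) fixed, this is \(O(n^{2r})\).

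There is no real obstacle. The only points needing care are the sign and ordering convention linking \(\prod n_{i_\ell}\) to \(a_I^\dagger a_I\), which is already supplied by \eqref{eq:r-rdm-diagonal-correlator}, and the observation that exploiting diagonality replaces the cubic matrix product by a quadratic sum. That replacement is what yields the \(n^{2r}\) rather than \(n^{3r}\) bound.
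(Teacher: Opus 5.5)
Your proposal is correct and follows essentially the same route as the paper. You take the $(I,I)$ entry of the propagation identity, collapse the double sum using $[\Gamma^{(r)}_\rho]_{P,Q}=w_P\delta_{PQ}$, and count $\binom nr$ outputs times $\binom nr$ summands, each needing an $O(r^3)$ determinant. Your explicit remark that exploiting diagonality avoids the dense $O(N_r^3)$ propagation is the same observation that underlies the paper's $O(n^{2r})$ bound.
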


\begin{proof}
Taking the \((I,I)\) entry of
\eqref{eq:r-rdm-propagation} and using diagonality gives
\eqref{eq:diagonal-r-body-algorithm}. There are \(\binom nr\) output
indices, \(\binom nr\) summands for each, and each \(r\times r\)
determinant costs \(O(r^3)\) arithmetic operations.
\end{proof}

The fixed-body assumption is essential for the stated complexity:
if \(r\) grows with \(n\), then \(\binom nr\) need not be polynomial.
Likewise, Proposition~\ref{prop:fixed-body-closure} is a closure
statement, not by itself an input-simulability theorem; classical
evaluation requires the input \(r\)-particle reduced density matrix to
be classically available.
For the triplet input in \eqref{eq:triplet-input}, set \(r=2\).
For block \(b\), let \(A_b\) and \(B_b\) denote its two disjoint
three-mode branches. Then
\[
\ket{\Psi_6^{(b)}}
=
\frac{\ket{A_b}+\ket{B_b}}{\sqrt2}.
\]
A two-body monomial can replace at most two occupied modes, whereas
\(\ket{A_b}\) and \(\ket{B_b}\) differ in three occupied modes.
Therefore
\[
\bra{A_b}
a_u^\dagger a_v^\dagger a_y a_x
\ket{B_b}
=
0
\]
for all mode indices \(u,v,x,y\). Distinct global branches of
\(\ket{\Psi_6}^{\otimes m}\) consequently have zero coherence in the
two-particle reduced density matrix. Since each individual branch is
a computational-basis Slater determinant,
\(\Gamma_{\mathrm{in}}^{(2)}\) is diagonal. Writing
\[
w_{pq}
:=
[\Gamma_{\mathrm{in}}^{(2)}]_{pq,pq},
\qquad
p<q,
\]
its entries are explicitly
\begin{equation}
w_{pq}
=
\begin{cases}
\frac12,
&
\substack{
p,q\text{ belong to the same triplet}\\
\text{branch of one block},
}
\\[2mm]
0,
&
\substack{
p,q\text{ belong to opposite branches}\\
\text{of one block},
}
\\[2mm]
\frac14,
&
\substack{
p,q\text{ belong to different}\\
\text{active blocks},
}
\\[2mm]
0,
&
\substack{
p\text{ or }q\text{ is an initially}\\
\text{vacant mode}.
}
\end{cases}
\label{eq:input-two-rdm-weights}
\end{equation}
Combining Corollary~\ref{cor:diagonal-r-rdm} with
\eqref{eq:input-two-rdm-weights} yields the closed formula
\begin{equation}
\boxed{
q_{ij}(\mathbf{x},\theta)
=
\sum_{p<q}
w_{pq}
\left|
W_{ip}W_{jq}-W_{iq}W_{jp}
\right|^2
}
\label{eq:explicit-two-body-algorithm}
\end{equation}
There are \(O(n^2)\) output pairs and \(O(n^2)\) terms per pair, so
the complete vector \(\mathbf q\) is computed in
\begin{equation}
\boxed{
T_{\mathbf q}(n,k)=O(n^4),
}
\label{eq:two-body-complexity}
\end{equation}
independently of \(k\), \(m=k/3\), and
\(\chi=2^{k/3}\). The same conclusion holds at inverse-polynomial
additive precision using polynomially many bits.

Gradients are also polynomially computable. If
\(R_{ij,pq}=W_{ip}W_{jq}-W_{iq}W_{jp}\), then
\[
\partial_\alpha q_{ij}
=
2\operatorname{Re}
\sum_{p<q}
w_{pq}\,
\overline{R_{ij,pq}}\,
\partial_\alpha R_{ij,pq},
\]
and \(W\) and \(\partial_\alpha W\) follow from the polynomial-size
product of known RBS and phase matrices.

For the paired form-rank-\(2\) block-product inputs studied by Oh
\emph{et al.}~\cite{Oh2026}, fixed-\(r\) reduced density matrices are
exactly classically available, so
Proposition~\ref{prop:fixed-body-closure} yields exact evaluation of
all number-conserving fixed-\(r\)-body expectations. Their mixed-Pfaffian
method instead gives additive-error estimators with complementary
scope: the reduced-density-matrix route is exact but polynomial only
for fixed \(r\), whereas their method also covers arbitrary-weight
number correlators, including weights growing with system size, and
additional overlap and transition primitives.

\section[Effect on the claims of Ref. 1]
{Effect on the claims of Ref.~\cite{Kerenidis2026v1}}

Equation~\eqref{eq:explicit-two-body-algorithm} contradicts the
algorithm-relative conclusion that evaluating the triplet-block
two-body readout requires
\(2^{\Omega(k)}\operatorname{poly}(n)\) classical time. In particular,
the \(2^{2k/3}\) Gaussian cross-term count is avoided entirely. The
form-rank-\(3\) and hyper-Pfaffian discussion concerns a more general
full-state expansion and does not obstruct propagation of the explicit
two-particle reduced density matrix. More precisely:

\begin{itemize}
\item The generic \(O(\chi^2\operatorname{poly}(n))\) simulation
procedure in Proposition~26 remains a valid upper bound, but it is not
optimal for the triplet-block two-body readout.

\item The expectation-value hardness conclusion of Theorem~28, and
the associated claim of a classically inaccessible supervised feature
vector, are invalidated by
\eqref{eq:explicit-two-body-algorithm}.

\item The supervised classification and regression models based only
on \((\langle n_i n_j\rangle)_{i<j}\), together with their parameter
gradients, admit polynomial classical evaluation.

\item The gradient-variance, barren-plateau, and parallel
parameter-shift results are not challenged.

\item The sampling results are not challenged. Output probabilities
and full sampling depend on correlations whose body order grows with
the particle number and are not determined by any fixed-particle
reduced density matrix. Supervised tasks nevertheless remain available
through sample-based heads; their relevant classical cost is the cost
of sampling rather than the cost of evaluating fixed-body
expectations.
\end{itemize}

This note gives a deterministic \(O(n^4)\) classical algorithm for the
triplet-block two-body readout. More generally, for every fixed \(r\),
number-conserving \(r\)-body expectations after passive fermionic
linear optics are polynomially computable whenever the input
\(r\)-particle reduced density matrix is classically available,
independently of the input state's form-rank; if this matrix is
diagonal, the complete diagonal correlator vector is computable in
\(O(n^{2r})\) time.

These results neither extend automatically to body order growing with
\(n\) nor affect the sampling-hardness conclusions of
Ref.~\cite{Kerenidis2026v1}. Although the reduced-density-matrix
propagation identity is standard, its application to the input and
readout of version~1 yields the fixed-body generalization and removes
the claimed exponential expectation-value cost.

\section*{Note added}

The present note concerns version~1 of
Ref.~\cite{Kerenidis2026v1}. Version~2 of that work, posted on August~20, 2026~\cite{Kerenidis2026v2}, removes the triplet encoding and all
expectation-value hardness claims and states the fixed-body
simulability result in general form.

\section*{Acknowledgments}

The author thanks Iordanis Kerenidis for constructive correspondence,
for his prompt and collegial response to the observation reported in
this note, and for suggesting the fixed-\(r\) generalization developed
in Section~\ref{sec:fixed-body-closure}.

\bibliographystyle{unsrt}
{\small
\bibliography{references}
}

\end{document}